\title{A Simplified  Parameterized Algorithm for Directed Feedback Vertex Set} 
\titlerunning{A Simplified  Parameterized Algorithm for DFVS} 
\author{Ziliang Xiong}{University of Electronic Science and Technology of China, Sichuan, China}{forgottencosecant@outlook.com}{https://orcid.org/0009-0005-2260-706X}{}
\author{Mingyu Xiao\footnote{The corresponding author.}}{University of Electronic Science and Technology of China, Sichuan, China}{ myxiao@gmail.com}{https://orcid.org/0000-0002-1012-2373}{National Natural Science Foundation of China, grant 62372095.}
\authorrunning{Z. Xiong and M. Xiao} 
\keywords{Parameterized Algorithms, Directed Feedback Sets, Graph Algorithms} 
\DeclareMathOperator{\expand}{expand}
\newcommand{\ve}{\varepsilon}
\newtheorem{problem}[theorem]{Problem}
\begin{document}

\maketitle


\begin{abstract}
The Directed Feedback Vertex Set problem (DFVS) asks whether it is possible to remove at most $k$ vertices from a directed graph to make it acyclic. Whether DFVS is fixed-parameter tractable was a long-standing open problem in parameterized complexity until it was solved by Chen et al. in 2008 (STOC 2008). Now the running-time bound of this problem is improved to $\mathcal O(k!4^kk^5(n+m))$ (Lokshtanov et al, SODA 2018), where $n$ and $m$ are the numbers of vertices and arcs in the graph. In this paper, we simplify one crucial step in all previous parameterized algorithms for DFVS, which is to solve the compression version of the problem, and refine the running-time bound for DFVS to $\mathcal O(k!2^{o(k)}(n+m))$.
\end{abstract}


\section{Introduction}

For a directed graph, a vertex subset is called a \emph{directed feedback vertex set} (dfvs) if its removal makes the whole graph acyclic.
The  Directed Feedback Vertex Set problem (DFVS)
is to check whether a directed graph has a dfvs of size at most $k$.
This problem is among the first 21 NP-complete problems identified by Karp \cite{DBLP:conf/coco/Karp72}, and whether it is fixed-parameter tractable was a long-standing open problem in parameterized complexity \cite{DBLP:conf/coco/DowneyF92,DBLP:series/mcs/DowneyF99,DBLP:journals/jcss/GuoGHNW06,DBLP:conf/iwpec/KanjPS04,DBLP:conf/wads/RamanS03}. Finally, Chen et al. \cite{DBLP:conf/stoc/ChenLL08} showed that this problem can be solved in $\mathcal O(k!4^k k^3 n^4)$ time by combining two techniques, iterative compression and important separators.
Subsequently, attention has been paid to further improvements on the running time bound.
Especially, whether there exists a single-exponential time parameterized algorithm becomes another challenging open problem in parameterized algorithms. This problem is too hard and  Saurabh \cite{FDPC} even asked in `` Recent Advances in Parameterized Algorithms''  whether there exists an algorithm that solves DFVS in $\left(\frac kc\right)^k n^{\mathcal O(1)}$ time for some larger constant $c$. For previous results, the constant $c$ is $e/4$. In this paper, we improve the constant $c$ to $e/(1+\ve)$ for arbitrary constant $\ve > 0$.

To better understand the research history of DFVS, we quickly review the iterative compression technique. Fix an arbitrary order of the vertices of the graph. Let $G_i$ be the subgraph induced by the first $i$ vertices in the order. The idea is to solve the problem on the graph $G_i$ based on a solution to the graph $G_{i-1}$. Assume that we have found a dfvs $S_{i-1}$ in $G_{i-1}$ with size $|S_{i-1}|= k$. Then $S'=S_{i-1}\cup \{v_i\}$ is a dfvs in $G_{i}$
with size $|S'|=k+1$. Compared to the original DFVS problem on $G_i$, 
we have one more piece of information, a solution $S'$ of size $k+1$. So, we need to check whether the graph $G_i$ has a dfvs of size $k$ based on a given dfvs of size $k+1$. We call this problem the DFVS Compression problem.

\begin{problem}[The DFVS Compression Problem]
    Given a directed graph $G$, an integer $k$, and a dfvs $W$ of $G$ with size $|W|=k+1$, decide whether there exists a dfvs of $G$ of size at most $k$.
\end{problem}

To solve DFVS on the graph $G$, we only need to solve DFVS Compression on the graph $G_i$ for $i$ from 1 to $n$. This iterative compression method has become a standard method to solve DFVS, and most parameterized algorithms for DFVS adopt this step.

One framework to solve DFVS Compression is to first enumerate which vertices in $S'$ should be included in the solution $S_i$ to $G_i$. The set of vertices in $S' \cap S_i$ will be deleted directly and other vertices $S'\setminus S_i$ not in the solution should not be deleted. Thus, the problem can be converted to the Disjoint DFVS Compression problem, which is to find a dfvs that is disjoint from the given dfvs and also strictly smaller than the given dfvs.
Note that enumerating $S'\cap S_i$ will incur a factor of $2^k$ to the running time.
The Disjoint DFVS Compression problem can be solved in $\mathcal O(k!4^kn^3)$ time via enumerating important separators \cite{DBLP:journals/jacm/ChenLLOR08}, which dirctly implies that the DFVS Compression problem can be solved in $\mathcal O(k!8^kn^3)$ time.
Later, Cygan et al. \cite{DBLP:books/sp/CyganFKLMPPS15} illustrated a simpler representation of the algorithm in \cite{DBLP:journals/jacm/ChenLLOR08}, in which we do not need to enumerate $S'\cap S_i$,  and improve the running time bound to $\mathcal O(k!4^kk^4n(n+m))$ with careful analysis.

Another interesting contribution was due to Lokshtanov et al. \cite{DBLP:conf/soda/LokshtanovR018}. They proposed a new powerful framework named recursive compression, to replace the iterative compression framework.
As a result, this framework replaces a factor $n$ with the parameter $k$ in the time complexity of the iterative compression-based algorithms, implying that DFVS can be solved in $\mathcal O(k!4^kk^5(n+m))$ time. 
Theorem 3.1 in \cite{DBLP:conf/soda/LokshtanovR018} implies the following.

\begin{lemma}\label{thm:linear}
    If there is an algorithm solving the DFVS Compression problem in $\mathcal O(f(k)|W|(n+m))$ time, then there is an algorithm solving DFVS in $\mathcal O(f(k)k(n+m))$ time.
\end{lemma}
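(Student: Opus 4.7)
The plan is to apply the recursive compression framework of Lokshtanov, Ramanujan, and Saurabh in place of naive iterative compression. Iterative compression processes vertices one at a time and invokes DFVS Compression $n$ times, each with $|W| = k+1$, yielding total time $\mathcal O(n \cdot f(k)(k+1)(n+m)) = \mathcal O(f(k) k n (n+m))$; the goal is to replace the factor $n$ with $k$.

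Concretely, I would define a recursive procedure $\mathrm{Solve}(U, k)$ that returns a dfvs of $G[U]$ of size at most $k$ or reports NO. For $|U|$ above a small base threshold, the procedure partitions $U = U_1 \cup U_2$ roughly evenly, recursively computes dfvs's $S_1, S_2$ of $G[U_1]$ and $G[U_2]$ each of size at most $k$, constructs from $S_1, S_2$ a dfvs $W$ of $G[U]$ of size $\mathcal O(k)$, and invokes DFVS Compression on $(G[U], k, W)$. Correctness is immediate: any dfvs of $G[U]$ of size at most $k$ restricts to dfvs's of size at most $k$ on both halves, so failure of either recursive call certifies a NO-instance; conversely, compression reduces $W$ to a dfvs of size $\le k$ whenever one exists.

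The technical core is the running-time analysis. Each compression call on a subgraph costs $\mathcal O(f(k) \cdot |W| \cdot (n_{\mathrm{sub}} + m_{\mathrm{sub}})) = \mathcal O(f(k) k (n_{\mathrm{sub}} + m_{\mathrm{sub}}))$, and since the subproblems at any recursion level are vertex-disjoint, the per-level compression cost sums to $\mathcal O(f(k) k (n+m))$. A careful amortization (as in LRS), exploiting linearity of the compression algorithm in the subgraph size and the disjointness of subproblems, eliminates the naive logarithmic overhead across recursion levels to give the tight $\mathcal O(f(k) k (n+m))$ bound. The main obstacles are (a) constructing, at each recursion step, an $\mathcal O(k)$-sized dfvs $W$ of $G[U]$ in linear time from the recursive outputs $S_1, S_2$, which needs a structural argument that $S_1 \cup S_2$ can be extended by only $\mathcal O(k)$ boundary vertices so as to kill all cycles using arcs crossing the partition, and (b) the amortization step itself, which must charge work in a way that avoids the extra $\log n$ factor that balanced divide-and-conquer naively introduces.
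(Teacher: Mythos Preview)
The paper does not prove this lemma at all; it simply cites Theorem~3.1 of Lokshtanov, Ramanujan, and Saurabh (SODA 2018) as the source. So there is no in-paper argument to compare against beyond the reference.

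Your sketch, however, contains a genuine gap, and it is precisely the point you flag as obstacle~(a). If you partition $U=U_1\cup U_2$ into two arbitrary halves and compute dfvs's $S_1$ of $G[U_1]$ and $S_2$ of $G[U_2]$, the set $S_1\cup S_2$ need \emph{not} be a dfvs of $G[U]$, and there is in general no way to extend it to one of size $\mathcal O(k)$. A simple counterexample: let $G$ consist of $t$ vertex-disjoint 2-cycles $a_i\to b_i\to a_i$, with $U_1=\{a_1,\dots,a_t\}$ and $U_2=\{b_1,\dots,b_t\}$. Both induced subgraphs are edgeless, so $S_1=S_2=\emptyset$, yet every dfvs of $G[U]$ has size at least $t$, which can be arbitrarily larger than $k$. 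Your procedure cannot build the required $W$, nor can it certify NO from this situation, since both recursive calls succeed. The phrase ``extended by only $\mathcal O(k)$ boundary vertices'' has no content for an arbitrary vertex bipartition: there is no boundary.

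The recursive compression of LRS does not split the vertex set symmetrically. It works on \emph{prefixes}: fix an ordering $v_1,\dots,v_n$, recursively obtain a dfvs $S_j$ of $G_j=G[\{v_1,\dots,v_j\}]$ of size at most $k$, and observe that $S_j\cup\{v_{j+1},\dots,v_i\}$ is always a dfvs of $G_i$, since any cycle of $G_i$ either lies in $G_j$ or uses one of the newly added vertices. This is the structural fact that makes the recursion sound, and it has no analogue for your two-sided split. Getting from this correct recursion to the claimed $\mathcal O(f(k)k(n+m))$ bound (your obstacle~(b)) then requires the specific amortization argument in LRS; your per-level disjointness argument does not apply, because in the prefix scheme the subproblems are nested rather than disjoint.
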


Therefore, a faster algorithm for the DFVS Compression problem implies a faster algorithm for the DFVS problem. By the above lemma, we can simply focus on the 
DFVS Compression problem.
In this paper, we not only greatly simplify previous algorithms for the DFVS compression problem but also improve the running time bound to $\mathcal O(k!2^{o(k)}(n+m))$. Therefore, we get the following result.

\begin{theorem} \label{thm:time}
DFVS can be solved in $\mathcal O(k!2^{o(k)}(n+m))$ time.
\end{theorem}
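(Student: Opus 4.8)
\section*{Proof plan}

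The plan is to design a new algorithm for the DFVS Compression problem that runs in $\mathcal{O}(k!\,2^{o(k)}(n+m))$ time and then apply \cref{thm:linear}; since the factor $2^{o(k)}$ absorbs the extra $k$ introduced by that lemma, this already yields \cref{thm:time}. Throughout, $W$ denotes the input dfvs with $|W|=k+1$, and $S$ denotes a dfvs of size at most $k$ that we wish to find. The digraph $H:=G-W$ is acyclic, so we fix once and for all a topological order of $H$.

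Since we cannot afford the usual factor $2^{k}$ for enumerating $S\cap W$, I would handle $W$ through its position in the target topological order rather than through a subset guess. If $G-S$ is acyclic then it has a topological order, which restricts to a linear order of the surviving vertices of $W$; conversely, a vertex of $W$ either survives, and then sits in a definite slot of this order, or is deleted, and is then charged to the budget. I would therefore enumerate a linear order $\sigma$ of all of $W$ — this costs $(k+1)!=k!\cdot(k+1)$ — with the understanding that some $W$-vertices may turn out to be deleted, and argue that no further enumeration over $S\cap W$ is needed because which $W$-vertices are deleted can be read off from the separator that we compute. After fixing $\sigma$, the task becomes: delete at most $k'\le k$ vertices of $V\setminus W$ so that the remaining digraph admits a topological order extending $\sigma$ on its surviving vertices of $W$.

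Next I would recast this as a clean separation problem on $H$. Placing each vertex $u$ of $H$ into the slot of $\sigma$ where it belongs is exactly a labeling $\phi\colon V(H)\to\{0,\dots,k+1\}$ that is monotone along the arcs of $H$ and respects, for every $u$, an interval $I(u)$ forced by the arcs between $u$ and $W$: an arc $(w_i,u)$ forces $\phi(u)\ge i$, and an arc $(u,w_i)$ forces $\phi(u)\le i-1$. A vertex must be put into $S$ precisely when it admits no consistent label. Writing $A_i=\{u:\min I(u)\ge i\}$ and $B_i=\{u:\max I(u)<i\}$, this is equivalent to deleting at most $k'$ vertices of $H$ so that no surviving path leads from $A_i$ to $B_i$ for any $i$; since $A_1\supseteq A_2\supseteq\cdots$ and $B_1\subseteq B_2\subseteq\cdots$, this is precisely the nested (skew) multicut problem — the same combinatorial core that every previous DFVS algorithm resolves, via important separators, in $4^{k}\cdot\mathrm{poly}$ time.

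The main new ingredient, and what I expect to be the main obstacle, is to solve this nested-multicut core \emph{simply} and in $2^{o(k)}(n+m)$ time without ever enumerating important separators. The line I would pursue is to treat the interfaces of $\sigma$ from left to right: at interface $i$ one computes, by a single max-flow with at most $k$ augmentations, the minimum cut ``closest to $B_i$'', and one branches only when these greedily chosen cuts at consecutive interfaces clash with the nesting requirement; the hope is that a measure-and-conquer analysis amortizing such branchings simultaneously against the remaining budget and the remaining suffix of $\sigma$ drives the branching factor down from $4^{k}$ to $(1+\ve)^{k}$ for every fixed $\ve>0$, hence to $2^{o(k)}$, in line with the constant $c=e/(1+\ve)$ announced in the introduction. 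Each elementary operation is then either a linear-time computation on the DAG $H$ or one max-flow with at most $k$ augmentations, so with the customary ``work-within-the-budget'' accounting DFVS Compression is solved in $\mathcal{O}(k!\,2^{o(k)}(n+m))$ time, and \cref{thm:linear} yields \cref{thm:time}. The two delicate points will be (i) making precise that the $2^{k}$ enumeration of $S\cap W$ can indeed be dispensed with, and (ii) establishing the amortized bound for the nested-multicut branching so that it provably beats $4^{k}$.
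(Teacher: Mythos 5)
Your plan diverges from the paper's actual argument in a way that leaves the central step unproved. You propose to pay the permutation cost $(k+1)!$ up front by guessing a linear order $\sigma$ of all of $W$, and then to solve the resulting nested (skew) multicut instance in $2^{o(k)}(n+m)$ time by a greedy, amortized branching scheme. But the nested-multicut step is exactly the part of the problem for which the known bound is $4^k$ via important separators, and no $(1+\ve)^k$ or $2^{o(k)}$ algorithm for it is known. Your sketch (``compute the minimum cut closest to $B_i$ by max-flow, branch only on clashes, amortize against the remaining budget and the suffix of $\sigma$'') does not explain why the number of clash branchings is subexponential: once $\sigma$ is fixed, the suffix is not a resource you can spend, because the $(k+1)!$ has already been paid in full. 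You yourself flag this as ``the main obstacle'' and describe it as a ``hope,'' so the proposal is best read as an incomplete plan, not a proof.

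The paper avoids this obstacle precisely by \emph{not} fixing the permutation in advance. Its algorithm repeatedly picks the ``last'' vertex $w$ of $W$ in a putative topological order, branches over important $(\{w^+\},(W\setminus w)\cup\{w^-\})$-cuts, and then applies a reduction rule that deletes from $W$ every vertex no longer contained in any cycle. This interleaving is what produces the $k!\,2^{o(k)}$ trade-off: each large important cut (size $\geq 2$) costs a factor of roughly $4^{|S|}$, but it also consumes at least $|S|$ of the budget $k$, which forces the eventual vertex sequence $\pi$ to be short (its length $a$ satisfies $a\le k$, and the number of size-one cuts is at least $2a-k$), and a short $\pi$ costs only $(k+1)!/(k+1-a)!\ll (k+1)!$ rather than the full permutation factor. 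In other words, the savings you are trying to extract inside the nested-multicut subroutine are, in the paper, extracted \emph{across} the permutation enumeration and the cut enumeration simultaneously; once you decouple the two by guessing $\sigma$ first, that trade-off disappears and you are left having to improve the $4^k$ bound for skew multicut itself, which is a different and unresolved problem. To make your proposal work you would either have to (a) actually prove a $2^{o(k)}$ bound for the nested multicut after fixing $\sigma$, which is not supported, or (b) restructure the argument to choose $\sigma$ one vertex at a time together with the cut at that interface and show, as the paper does via Claim~\ref{claim} and the bound $|B|\ge 2a-k$, that the combined enumeration is $\mathcal O(k!\,\gamma_\ve^k)$.
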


Next, before going to the details of our algorithm and analysis, we first briefly introduce the main idea of our algorithm for the DFVS Compression problem and why we can achieve the improvements.

Essentially, previous algorithms would guess a permutation $\pi$ of the vertex set $W$, which will generate a factor of $\mathcal O(k!)$ in the running time, and then go over each element of $W$ in order of the guessed permutation, and guess all the $\mathcal O(4^k)$ important separators for each element.
The total running time will be $\mathcal O(k!4^k)$.

In this paper, we observe that the guess of the important separator for the last element in $W$ does not depend on the permutation $Pi$. Thus, instead of guessing $\pi$ first and then the important separators, we guess the last element $w$ in $\pi$ and then the important separator from $w$ to other vertices in $W$. The benefit of this is one additional reduction rule: if a vertex $w' \in W$ is not in any directed cycles then we remove simply $w'$ from the set  $W$. This simple reduction, however, is useful in the analysis.

The main idea of the analysis goes roughly as follows.
If the important separator of the last element we guessed is always of size 1,  then we will get the running time $O(k!)$, which is equal to the time of enumerating all permutations of $W$.
Whenever some important separator has size $i\geq 2$, then it contributes about $4^i$ important separators and a factor of $4^i$ in the running time. However, it also ensures that the above reduction rule will be applied for at least $i-1$ times since the sum of all important separator sizes is bounded by $k$. 
If the reduction rule is applied more than $z = c\cdot k/\log k$ times for a sufficiently large constant $c$, then we end up saving a factor of $z! > 4^k$ in the running time. Thereby we will win over the factor incurred by all of the important separator guessings and again can get the same bound $4^k$.
Thus, the reduction rule is applied for at most $c\cdot k/\log k$ times. For this case, the sum of all the important separators of size $>1$ is $\mathcal O(k/\log k)$. So the factor for all of the important separator branchings is at most $2^{o(k)}$. Overall, we will get a bound of $\mathcal O(k!2^{o(k)})$.

\section{Preliminaries}
We use angle brackets to denote a sequence of elements,
e.g., $\langle a, b, c \rangle$ denotes a sequence of $a,b$, and $c$. We also use $\langle\rangle$ to denote an empty sequence.
Let $G$ be a directed graph with $n$ vertices and $m$ arcs.
For a directed graph $G'$, the set of vertices and the set of arcs are denoted by $V(G')$ and $E(G')$, respectively.
The subgraph of $G$ \emph{induced} by a vertex subset $X \subseteq V(G)$ is the graph obtained by removing all vertices not in $X$ and their incident arcs, denoted by $G[X]$.
A \emph{path} of $G$ is a sequence of vertices in which there is an arc pointing from each vertex in the sequence to its successor in the sequence.
We say a path $P$ passes through an arc $(u,v)$ if $v$ is a successor of $u$ in $P$, and use $E(P)$ to denote the arcs that $P$ passes through.
For two disjoint vertex sets $X$ and $Y$, an $(X,Y)$-path of $G$ is a path that starts with a vertex in $X$ and ends with a vertex in $Y$.
We say a vertex $r$ is reachable from $X$ if there exists an $(X,\{r\})$-path in $G$.
A cycle of $G$ is a path of $G$ that starts and ends at the same vertex.
A dfvs of $G$ is a set of vertices whose removal from $G$ results in an acyclic directed graph.
A dfas of $G$ is a set of arcs whose removal from $G$ results in an acyclic directed graph.
An $(X,Y)$-cut of $G$ is a set of arcs whose removal breaks every $(X,Y)$-path of $G$.
An $(X,Y)$-cut $F$ is \emph{important} if it satisfies the following two conditions.

\begin{enumerate}
    \item
    $F$ is inclusion-wise minimal. That is, any proper subset $F' \subset F$ is not an $(X,Y)$-cut.
    \item
    Let $R$ be the set of vertices reachable from $X$ in $G- F$.
    For any $(X,Y)$-cut $F'$ that is not larger than $F$, the set of vertices reachable from $X$ in $G - F'$ is a subset of $R$.
\end{enumerate}

We use the following properties of important cuts in this paper.

\begin{proposition}[Proposition 8.35 of \cite{DBLP:books/sp/CyganFKLMPPS15}]\label{impcut-exist}
    Let $G$ be a directed graph and $X,Y \subseteq V(G)$ be two disjoint sets of vertices.
Let $S$ be an $(X,Y)$-cut. There is an important $(X,Y)$-cut $S'$ (possibly, $S'=S$) such that $|S'| \leq |S|$ and every vertex reachable from $X$ in $G - S$ is also reachable from $X$ in $G-S'$.
    
\end{proposition}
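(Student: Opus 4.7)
The plan is to construct $S'$ by an extremal principle applied to cuts ``at least as reachable'' as $S$, and then verify importance via submodularity of the out-cut function $\partial^+$. For any $(X,Y)$-cut $F$, let $R(F)$ denote the set of vertices reachable from $X$ in $G-F$, and define
\[
\mathcal{F} := \{F : F \text{ is an } (X,Y)\text{-cut},\ |F| \le |S|,\ R(S) \subseteq R(F)\},
\]
which is non-empty because $S \in \mathcal{F}$. We choose $S' \in \mathcal{F}$ with $|S'|$ minimum and, subject to that, with $R(S')$ set-inclusion maximal (well-definedness follows from finiteness of $\mathcal{F}$). Then $|S'| \le |S|$ and $R(S) \subseteq R(S')$ both hold by construction, matching the non-importance parts of the conclusion.

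For condition~1 of importance, any strict sub-cut $F \subsetneq S'$ is still an $(X,Y)$-cut satisfying $R(F) \supseteq R(S') \supseteq R(S)$ and $|F| < |S'|$, placing it in $\mathcal{F}$ with smaller cardinality than $S'$---a contradiction. For condition~2, assume toward contradiction that some $(X,Y)$-cut $F'$ with $|F'| \le |S'|$ has $R(F') \not\subseteq R(S')$; we may take $F'$ inclusion-wise minimal. Form the union cut $T := \partial^+(R(S') \cup R(F'))$; it is an $(X,Y)$-cut whose reachable set is exactly $R(S') \cup R(F') \supsetneq R(S')$ and which contains $R(S)$. If $|T| \le |S'|$ then $T \in \mathcal{F}$, and $T$ contradicts either the size-minimality of $S'$ (when $|T| < |S'|$) or the reachability-maximality of $S'$ (when $|T| = |S'|$), closing the argument.

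Everything therefore reduces to bounding $|T| \le |S'|$. Submodularity of $\partial^+$ applied to $R(S')$ and $R(F')$ gives
\[
|T| + |T^*| \ \le\ |\partial^+(R(S'))| + |\partial^+(R(F'))| \ =\ |S'| + |F'| \ \le\ 2|S'|,
\]
where $T^* := \partial^+(R(S') \cap R(F'))$ is itself an $(X,Y)$-cut because $X \subseteq R(S') \cap R(F')$ and $Y$ is disjoint from $R(S') \cup R(F')$. The desired inequality is equivalent to $|T^*| \ge |F'|$, and this is the main obstacle in the proof. Directly, submodularity only gives $|T^*| \ge \lambda$, where $\lambda$ is the minimum $(X,Y)$-cut size, which suffices just when $F'$ itself is a minimum cut. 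The remedy we adopt is to take $F'$ to be a minimum-size counterexample and then, in the case $|T^*| < |F'|$, run a short uncrossing argument on $T^*$: combined with the maximality of $R(S')$ and the minimality of $F'$, this forces either a strictly smaller counterexample or $R(F') \subseteq R(S')$ after all, yielding the desired contradiction.
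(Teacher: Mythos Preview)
The paper does not prove this proposition at all; it simply cites it from the Cygan et~al.\ textbook. So there is no ``paper's proof'' to compare against. More importantly, your difficulty is caused by a mis-stated definition in the paper, not by a genuine obstacle in the mathematics.

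The paper's Condition~2 reads ``for any $(X,Y)$-cut $F'$ that is not larger than $F$, $R(F')\subseteq R$''. The standard definition (Definition~8.34 in the cited book) instead requires only that no such $F'$ satisfies $R(F')\supsetneq R$. These differ precisely when $R(F')$ and $R$ are incomparable. Under the paper's literal wording the proposition is \emph{false}: take $s\to a,\ s\to b,\ a\to t,\ b\to t,\ a\to c,\ c\to t,\ b\to d,\ d\to t$ with $X=\{s\}$, $Y=\{t\}$, and $S=\{(s,a),(b,t),(d,t)\}$. Then $R(S)=\{s,b,d\}$, but every cut of size $\le 3$ whose reachable set contains $\{s,b,d\}$ fails the paper's Condition~2 because $\{(a,t),(c,t),(s,b)\}$ has the incomparable reachable set $\{s,a,c\}$. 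Hence the ``short uncrossing argument on $T^*$'' you defer to at the end cannot be completed; that step is a real gap, and unfillable as written.

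With the intended definition, however, your extremal setup already finishes the proof with no submodularity at all. Choose $S'\in\mathcal{F}$ with $R(S')$ inclusion-maximal and then replace $S'$ by an inclusion-minimal subcut (this only enlarges $R(S')$, so the choice is consistent). Condition~1 is immediate. For Condition~2 in its correct form, suppose some $F'$ with $|F'|\le|S'|$ had $R(F')\supsetneq R(S')$; then $|F'|\le|S'|\le|S|$ and $R(S)\subseteq R(S')\subsetneq R(F')$, so $F'\in\mathcal{F}$ with strictly larger reachable set than $S'$, contradicting maximality. That is the entire argument. Your submodularity machinery, the choice of minimum-$|F'|$ counterexample, and the uncrossing of $T^*$ are all unnecessary once the definition is read correctly.
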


\begin{theorem}[Theorem 8.36 of \cite{DBLP:books/sp/CyganFKLMPPS15}]\label{impcut-enum}
    Let $G$ be a directed graph with $n$ vertices and $m$ arcs, and $X,Y \subseteq V(G)$ be two disjoint sets of vertices, let $k \geq 0$ be an integer, and let $\mathcal S_k$ be the set of all important $(X,Y)$-cuts of size at most $k$.
    Then $|\mathcal S_k| \leq 4^k$ and $\mathcal S_k$ can be construted in time $\mathcal O(|\mathcal S_k|k^2(n+m))$.
\end{theorem}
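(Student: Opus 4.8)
The plan is to reduce, via \cref{thm:linear}, to the DFVS Compression problem: it suffices to give an algorithm for DFVS Compression running in time $\mathcal O(f(k)\,|W|(n+m))$ with $f(k)=k!\,2^{o(k)}$. Since then \cref{thm:linear} yields DFVS in time $\mathcal O(f(k)\,k(n+m))$, and the extra factor $k$ is absorbed into the $2^{o(k)}$ term, this gives the claimed $\mathcal O(k!\,2^{o(k)}(n+m))$ bound. So the whole task is to build such an algorithm for DFVS Compression.

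The algorithm I would use starts from the standard important-separator algorithm for DFVS Compression (through its Disjoint variant) of \cite{DBLP:journals/jacm/ChenLLOR08,DBLP:books/sp/CyganFKLMPPS15}: one works in the arc-cut model (split each vertex $v$ into $v^-\to v^+$ so that deleting a vertex corresponds to deleting one arc), and processes the terminals of $W$ from last to first, at each step either committing the current terminal to the solution or guessing an important arc-cut separating it from the not-yet-processed terminals; \cref{impcut-enum} bounds the number of such guesses for a cut of size $i$ by $4^i$, together with their enumeration time. The modification, following the introduction, is to \emph{not} fix the processing order of $W$ in advance. Instead, at each recursive step the algorithm (i) exhaustively applies reduction rule R — if some $w'\in W$ lies on no directed cycle of the current graph, delete $w'$ from $W$ — and then (ii) branches on which surviving $w\in W$ is the last one, on whether $w$ belongs to the solution, and, if it does not, on an important $(\{w\},W\setminus\{w\})$-cut $F$; it then puts the $\le|F|$ vertices of $F$ into the solution, decreases the budget by $|F|$, removes $w$ from $W$, and recurses. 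This is sound precisely because the important-cut guess for the \emph{last} terminal does not depend on the order of the remaining terminals. Correctness will follow from the usual exchange argument: if a dfvs $S$ of size $\le k$ exists, then $W$ is still a dfvs of $G-S$, so $G-S$ is acyclic; taking the last vertex $w$ of $W\setminus S$ in a topological order of $G-S$, either $w\in S$ (that branch is taken) or $S$ is an $(\{w\},W\setminus\{w\})$-cut, and \cref{impcut-exist} supplies an important cut $F$ with $|F|\le|S|$ that can replace the relevant part of $S$ without enlarging the solution and while keeping it valid — and $F$ is among the enumerated cuts.

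The running time will be governed by the number of leaves of the recursion tree, each node costing $\mathrm{poly}(k)\cdot(n+m)$ once the enumeration cost of \cref{impcut-enum} for a size-$i$ cut is charged to the (at most $4^i$) branches it creates. The accounting rests on two facts. First, a size-$1$ important $(\{w\},W\setminus\{w\})$-cut, when it is the cut actually used, is the unique minimum cut, so that branch has a single child and costs nothing beyond the choice of $w$; hence if every cut used along a root-to-leaf path has size $1$, the number of leaves on that path is just $\prod_j|W_j|\le (k+1)!$. Second, using a cut of size $i\ge2$ forces reduction rule R to fire at least $i-1$ more times along that path. Then on a root-to-leaf path whose branchings use cuts of sizes $i_1,\dots,i_\ell$ (so $\sum_j i_j\le k$, each contributing $i_j$ solution vertices) and with $\rho$ applications of rule R, the ``choose the last vertex'' factors multiply to at most $(k+1)!/\rho!$, while the important-cut factors multiply to at most $\prod_{i_j\ge2}4^{i_j}\le 4^{2\rho}$ by the second fact; hence the leaf count is at most $\tfrac{(k+1)!}{\rho!}\,2^{O(\rho)}$. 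If $\rho\le k/\log k$ this is $(k+1)!\,2^{o(k)}$; if $\rho> k/\log k$ then $\rho!$ already dominates $2^{O(\rho)}$ and the bound is $(k+1)!$. Either way the total running time is $k!\,2^{o(k)}(n+m)$.

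The main obstacle will be establishing the second fact: explaining why committing an important cut of size $i$ inevitably produces $i-1$ vertices of $W$ that subsequently lie on no cycle and are removed by rule R, and ensuring these removals are genuinely \emph{new} rather than double-counted against other branchings on the same path. This requires analyzing how an important (i.e.\ ``pushed'') cut reshapes the cyclic structure of the remaining instance — in particular how reachability to, and minimum-cut values between, the surviving terminals of $W$ evolve down the recursion. A secondary but necessary technical point is to keep the per-node enumeration cost of \cref{impcut-enum}, which is $\mathcal O(4^i i^2(n+m))$ for cuts of size up to $i$, within budget by charging it to the corresponding branches, and to implement rule R and the reachability bookkeeping so that each node genuinely runs in $\mathrm{poly}(k)(n+m)$ amortized time.
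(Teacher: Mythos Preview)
Your proposal does not address the stated theorem. The statement labelled \texttt{impcut-enum} is the textbook enumeration bound for important cuts, quoted here from \cite{DBLP:books/sp/CyganFKLMPPS15} as a preliminary and \emph{not proved} in this paper. Your write-up is instead a proof sketch for the paper's main result, Theorem~\ref{thm:time}, and in fact invokes \texttt{impcut-enum} as a black box. So there is nothing to compare: the paper offers no proof of Theorem~\ref{impcut-enum}, and your proposal argues a different statement.

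As a side remark on what you actually wrote (a sketch for Theorem~\ref{thm:time}): the algorithmic skeleton matches the paper's Algorithm~\ref{alg:dfasv}, but your running-time analysis diverges at the crucial point. Your ``second fact'' --- that committing an important cut of size $i\ge 2$ forces the acyclicity reduction rule R to fire at least $i-1$ further times down the path --- is neither proved nor plausibly true: there is no general reason an important $(\{w^+\},Z)$-cut of size $i$ should render $i-1$ other terminals of $W$ acyclic in the residual graph, and you yourself flag this as ``the main obstacle.'' The paper does not need any such claim. It uses R only to guarantee that every chosen cut has size $\ge 1$; the savings then come from a pure budget argument on each root-to-leaf path: if the vertex sequence has length $a$ and $b$ of the cuts have size $1$, then the remaining $a-b$ cuts have size $\ge 2$, so $b+2(a-b)\le k$, i.e.\ $b\ge 2a-k$. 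Combining this lower bound on $b$ with the estimate $f(\pi,B,k)\le 4^{k-|B|}$ of Claim~\ref{claim} and the factor $(k{+}1)!/(k{+}1{-}a)!$ for the choice of $\pi$ yields the $k!\,2^{o(k)}$ bound directly, with no need to track applications of R at all.
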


\begin{lemma}[Lemma 8.37 of \cite{DBLP:books/sp/CyganFKLMPPS15}]\label{size1}
    Let $G$ be a directed graph with $n$ vertices and $m$ arcs, and $X,Y \subseteq V(G)$ be two disjoint sets of vertices.
    If $\mathcal S$ is the set of all important $(X,Y)$-cuts, then $\sum_{S \in \mathcal S} 4^{-|S|} \leq 1$.
\end{lemma}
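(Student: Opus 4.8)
The plan is to prove the slightly stronger statement that $\sum_{S \in \mathcal S} 4^{-|S|} \le 2^{-\lambda}$, where $\lambda = \lambda(X,Y)$ denotes the size of a minimum $(X,Y)$-cut; since $\lambda \ge 0$, this implies the claimed bound $\le 1$. (One can view $2^{-\lambda}$ as the weighted refinement of the count $4^k = 2^{2k}$ from Theorem~\ref{impcut-enum}, and the proof mirrors the branching behind that theorem.) I would argue by induction on the nonnegative integer $\mu(G,X,Y) = 2|E(G)| - \lambda(X,Y)$. In the base case $\lambda = 0$, i.e.\ $Y$ is unreachable from $X$, the empty set is an $(X,Y)$-cut and is the only important one (every nonempty cut properly contains the empty cut, hence fails inclusion-minimality), so the sum equals $4^0 = 1 = 2^0$. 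For the inductive step assume $\lambda \ge 1$. I would first invoke the standard fact that replacing $X$ by $R_{\max}$ — the unique inclusion-wise maximal set of vertices that is reachable from $X$ over the source sides of all minimum $(X,Y)$-cuts — changes neither $\lambda$ nor the family $\mathcal S$ of important $(X,Y)$-cuts; so we may assume $X = R_{\max}$, which means that the set $\partial^+(X)$ of arcs leaving $X$ is itself a minimum $(X,Y)$-cut, of size $\lambda \ge 1$. Pick any arc $e = (u,v) \in \partial^+(X)$, so $u \in X$ and $v \notin X$. Since $\partial^+(X)$ is inclusion-minimal, $e$ lies on an $(X,Y)$-path avoiding $\partial^+(X)\setminus\{e\}$; in particular there is a $(v,Y)$-path in $G - \partial^+(X)$.

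I would then split $\mathcal S = \mathcal S_e \sqcup \mathcal S_{\bar e}$ into the important $(X,Y)$-cuts that do, respectively do not, contain $e$, and bound each part. For $\mathcal S_e$: the map $S \mapsto S \setminus \{e\}$ sends $\mathcal S_e$ injectively into the important $(X,Y)$-cuts of $G - e$, and $\lambda(G-e,X,Y) \ge \lambda - 1$ (adding $e$ back to any $(X,Y)$-cut of $G-e$ yields one of $G$); since $\mu(G-e,X,Y) < \mu(G,X,Y)$, the induction hypothesis gives
\[ \sum_{S \in \mathcal S_e} 4^{-|S|} \;=\; \frac14 \sum_{S \in \mathcal S_e} 4^{-|S \setminus \{e\}|} \;\le\; \frac14 \sum_{T \in \mathcal S(G-e,X,Y)} 4^{-|T|} \;\le\; \frac14 \cdot 2^{-(\lambda-1)} \;=\; \frac12 \cdot 2^{-\lambda}. \]
For $\mathcal S_{\bar e}$: if $e \notin S$ then $v$ is reachable from $X$ in $G - S$ (through $u \in X$), so $S$ is an $(X\cup\{v\},Y)$-cut; moreover (see the technical point below) $S$ is an \emph{important} $(X\cup\{v\},Y)$-cut, and $\lambda(X\cup\{v\},Y) \ge \lambda+1$. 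The latter inequality is the heart of the argument: any $(X\cup\{v\},Y)$-cut $F$ of size $\lambda$ would be a minimum $(X,Y)$-cut, so by maximality of $X = R_{\max}$ the set reachable from $X$ in $G - F$ is contained in $X$, hence equals $X$, forcing $\partial^+(X) \subseteq F$ and therefore $F = \partial^+(X)$ by equality of sizes — but $\partial^+(X)$ is not a $(v,Y)$-cut, a contradiction. Since $\mu(G,X\cup\{v\},Y) < \mu(G,X,Y)$, the induction hypothesis yields
\[ \sum_{S \in \mathcal S_{\bar e}} 4^{-|S|} \;\le\; \sum_{T \in \mathcal S(G,X\cup\{v\},Y)} 4^{-|T|} \;\le\; 2^{-\lambda(X\cup\{v\},Y)} \;\le\; 2^{-(\lambda+1)} \;=\; \frac12 \cdot 2^{-\lambda}. \]
Adding the two bounds gives $\sum_{S \in \mathcal S} 4^{-|S|} \le 2^{-\lambda}$, which closes the induction.

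The step where I expect real care to be needed is the technical point used for $\mathcal S_{\bar e}$: that every important $(X,Y)$-cut $S$ with $e \notin S$ is not merely some $(X\cup\{v\},Y)$-cut but an \emph{important} one, so that its weight is genuinely counted in the inductive bound for $(G,X\cup\{v\},Y)$. Inclusion-minimality of $S$ as an $(X\cup\{v\},Y)$-cut is immediate (any proper subset that is an $(X\cup\{v\},Y)$-cut is an $(X,Y)$-cut, contradicting minimality of $S$), while the domination condition must be verified against $(X\cup\{v\},Y)$-cuts $F$ with $|F|\le|S|$: the case in which $v$ is reachable from $X$ in $G-F$ reduces to the importance of $S$ w.r.t.\ $X$, and the remaining case must be excluded using $X = R_{\max}$ together with the $(v,Y)$-path in $G - \partial^+(X)$ — the directed-graph analogue of the usual "pushing" manipulations for important cuts. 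Two further points are routine but should be spelled out: the closure fact that $X$ may be replaced by $R_{\max}$ (itself an uncrossing/domination argument), and the well-definedness of the induction, for which one checks that $\mu$ strictly decreases in both branches — it drops because $|E(G)|$ decreases when $e$ is deleted, and because $\lambda$ strictly increases (with $|E(G)|$ unchanged) when $v$ is added to the source.
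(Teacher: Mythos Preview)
The paper does not supply its own proof of this lemma; it is quoted without proof as Lemma~8.37 of the Cygan et~al.\ textbook and used as a black box. Your proposal is correct and is exactly the standard argument from that reference: strengthen the statement to $\sum_{S\in\mathcal S}4^{-|S|}\le 2^{-\lambda}$, normalise $X$ to $R_{\max}$, branch on an arc $e=(u,v)\in\partial^{+}(X)$, and combine $\lambda(G-e,X,Y)\ge\lambda-1$ with $\lambda(G,X\cup\{v\},Y)\ge\lambda+1$.

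The technical point you flag --- that each $S\in\mathcal S_{\bar e}$ is genuinely an \emph{important} $(X\cup\{v\},Y)$-cut --- does go through, and in fact does not even need $X=R_{\max}$: given a competing $(X\cup\{v\},Y)$-cut $F$ with $|F|\le|S|$, let $R$ be the set reachable from $X\cup\{v\}$ in $G-F$ and pass to $\partial^{+}(R)\subseteq F$; since $u,v\in R$ the arc $e$ is internal to $R$, so the $X$-reach in $G-\partial^{+}(R)$ equals $R$, and importance of $S$ for $(X,Y)$ yields $R\subseteq R_S^{X}=R_S^{X\cup\{v\}}$. The assumption $X=R_{\max}$ is needed only where you use it, namely to force $\lambda(X\cup\{v\},Y)\ge\lambda+1$.
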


\begin{lemma}
    For any two disjoint sets of vertices $X$ and $Y$ in a directed graph, there is at most one important $(X,Y)$-cut of size one.
\end{lemma}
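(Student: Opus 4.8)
The plan is to argue by contradiction: suppose $F_1$ and $F_2$ are two distinct important $(X,Y)$-cuts, each of size one, and derive a contradiction. Write $e_1$ and $e_2$ for the single arc of $F_1$ and $F_2$ respectively, and let $R_i$ be the set of vertices reachable from $X$ in $G - F_i$. Since $F_2$ is an $(X,Y)$-cut with $|F_2| \le |F_1|$, the second condition in the definition of an important cut, applied to $F_1$, gives $R_2 \subseteq R_1$; by symmetry $R_1 \subseteq R_2$, so $R_1 = R_2$, and call this common set $R$. (Note that $X \subseteq R$ always, and $Y \cap R = \emptyset$, since an $(X,Y)$-walk ending in $R$ would survive in $G - F_i$ and contradict that $F_i$ is a cut.) So both important cuts leave behind the same reachable set $R$, and it remains to show that $R$ determines the cut.

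The key step is the following general claim: if $F$ is an inclusion-wise minimal $(X,Y)$-cut and $R$ is the set of vertices reachable from $X$ in $G - F$, then $F$ equals $\partial R$, the set of all arcs of $G$ with tail in $R$ and head in $V(G)\setminus R$. One inclusion, $\partial R \subseteq F$, is immediate: if an arc $(u,v)$ with $u \in R$ and $v \notin R$ were not in $F$, it would still be present in $G - F$, and since $u$ is reachable from $X$ in $G - F$ so would be $v$, contradicting $v \notin R$. For the reverse inclusion $F \subseteq \partial R$, take any $e = (u,v) \in F$. By minimality, $F \setminus \{e\}$ is not a cut, so some $(X,Y)$-walk $P$ avoids $F \setminus\{e\}$; since $F$ itself is a cut, $P$ must use $e$. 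Considering the first occurrence of $e$ on $P$, the prefix of $P$ up to $u$ avoids all of $F$, so $u \in R$; considering the last occurrence of $e$, the suffix of $P$ from $v$ onward avoids all of $F$ and ends in $Y$, so if $v$ were in $R$ we could concatenate a walk in $G-F$ witnessing $v \in R$ with this suffix to obtain an $(X,Y)$-walk in $G - F$, a contradiction. Hence $v \notin R$ and $e \in \partial R$. (One may instead work throughout with simple paths, which removes the ``first/last occurrence'' bookkeeping.)

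Combining the two parts finishes the proof: $F_1 = \partial R = F_2$, contradicting $F_1 \ne F_2$. Two degenerate cases cause no trouble: if $G$ has no $(X,Y)$-walk at all, then $\emptyset$ is already an $(X,Y)$-cut, so no singleton is inclusion-wise minimal and there is no important $(X,Y)$-cut of size one, making the statement vacuously true; and the argument never needs the value of the minimum $(X,Y)$-cut. The only place demanding genuine care is the walk surgery inside the claim, namely verifying that the relevant prefix and suffix of $P$ really avoid all of $F$; this is why I would phrase that part in terms of simple paths. Everything else is a direct unwinding of the definition of an important cut together with the observation that a minimal cut is exactly the arc boundary of its reachable set.
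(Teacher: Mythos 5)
Your proof is correct and takes a genuinely different route from the paper's. The paper observes that, since the minimum cut has size one, there is a single $(X,Y)$-path passing through both $e_1$ and $e_2$; taking $e_1$ to precede $e_2$ on it, the reachable set of $\{e_1\}$ is strictly contained in that of $\{e_2\}$, contradicting importance of $\{e_1\}$. You instead first argue that two important cuts of equal size leave the same reachable set $R$, and then prove the general structural fact that an inclusion-minimal $(X,Y)$-cut equals the arc boundary $\partial R$ of its reachable set, so $R$ pins the cut down uniquely. Your route avoids any reference to the value of the minimum cut or to orderings of bottleneck arcs along a path, and the boundary characterization is a clean, reusable lemma. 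One caveat worth noting: your step $R_1 = R_2$ reads the paper's condition (2) on important cuts literally, as saying $R_{F'} \subseteq R_F$ for \emph{every} cut $F'$ with $|F'|\le|F|$, i.e.\ $R_F$ is a unique maximum. Under the standard formulation (Definition 8.13 in Cygan et al.), importance only requires that no such $F'$ has $R_F \subsetneq R_{F'}$, so two same-size important cuts could a priori have incomparable reachable sets, and $R_1 = R_2$ is not immediate. Your argument can be patched there via submodularity ($\partial(R_1\cup R_2)$ is again a size-one $(X,Y)$-cut whose reachable set contains both $R_1$ and $R_2$, forcing $R_1 = R_1\cup R_2 = R_2$), whereas the paper's path-based argument works directly under either reading; that is the main robustness trade-off between the two approaches.
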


\begin{proof}
Assume to the contrary that there are  two different important $(X,Y)$-cuts $\{e_1\}$ and $\{e_2\}$ of size one.   Let $R_1$ and $R_2$ be the sets of vertices reachable from $X$ in $G-\{e_1\}$ and  $G-\{e_2\}$, respectively.
    Since the size of the minimum $(X,Y)$-cut is 1 and both $\{e_1\}$ and $\{e_2\}$ are $(X,Y)$-cuts, there is an $(X,Y)$-path $P$ that passes through $e_1$ and $e_2$. Without loss of generality, we assume that $e_1$ precedes $e_2$ in $P$. Then we have $R_1 \subset R_2$, a contradiction with $\{e_1\}$ being an important cut.
\end{proof}




\section{Solving DFVS Compression}
Now we are ready to solve the DFVS Compression problem. We will follow the framework in \cite{DBLP:books/sp/CyganFKLMPPS15}:
Instead of solving DFVS Compression directly, we convert the problem to the following feedback arc set problem in a directed graph.  

\begin{problem}[Directed Feedback Arc Set Problem with dfvs as Hint (DFAS-V)]
    Given a directed graph $G$, an integer $k$, and a dfvs $W$ of $G$ with size $|W| = k+1$, decide whether there exists a dfas $F$ of $G$ whose size is at most $k$.
\end{problem}

\begin{definition}\label{def-expand}
For a directed graph $G$ and a vertex $w \in V(G)$, let $\expand(G,w,(w^-,w^+))$ be the directed graph obtained by splitting $w$ into two vertices $w^-$ and $w^+$ connected by an arc $(w^-,w^+)$, where all arcs pointing to $w$ will be pointing to $w^-$, and all arcs starting from $w$ will be starting from $w^+$.
\end{definition}

Let $I=(G,W,k)$ be an instance of the DFVS Compression problem.
We construct an instance $I'=(G',W',k)$ of the DFAS-V problem that is equivalent to $I$.
Let $G'$ be the directed graph obtained from $G$ by iteratively calling $\expand(G,v_i,(v_i^-,v_i^+))$ for all vertices $v_i\in V(G)$, i.e., $G'$ is obtained from $G$ by expanding every vertex $v_i\in V(G)$ into $(v_i^-, v_i^+)$.
Let  $W'=\{ v_i^+ | v_i \in W\}$. 
The following lemma shows the equivalency between $I$ and $I'$.

\begin{lemma}\label{sec-f}
    An instance $(G,W,k)$ is a yes-instance of the DFVS Compression problem if and only if $(G',W',k)$ is a yes-instance of DFAS-V.
    Furthermore, given a solution $S'$ to $(G',W',k)$, we can construct a solution $S$ to $(G,W,k)$ in time linear to the size of $G$.
\end{lemma}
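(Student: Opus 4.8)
The plan is to establish a bijection-like correspondence between dfvs's of $G$ and dfas's of $G'$ that respects the hint and the size bound, exploiting the fact that in $G'$ every vertex $v_i$ has been split so that the arc $(v_i^-, v_i^+)$ is the unique "internal" arc carrying all the responsibility for killing cycles through $v_i$. First I would record the basic structural fact that cycles of $G$ and cycles of $G'$ are in natural correspondence: a cycle $C = \langle v_{i_1}, v_{i_2}, \dots, v_{i_\ell}, v_{i_1}\rangle$ in $G$ lifts to the cycle $C' = \langle v_{i_1}^-, v_{i_1}^+, v_{i_2}^-, v_{i_2}^+, \dots, v_{i_\ell}^-, v_{i_\ell}^+, v_{i_1}^-\rangle$ in $G'$, and conversely every cycle of $G'$ must enter each vertex-gadget through $v_i^-$ and leave through $v_i^+$ (since the only arc out of $v_i^-$ is $(v_i^-, v_i^+)$), so contracting all the internal arcs $(v_i^-, v_i^+)$ recovers a cycle of $G$. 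Thus $C \mapsto C'$ is a bijection between the cycles of $G$ and the cycles of $G'$, and moreover $C'$ passes through the internal arc $(v_i^-, v_i^+)$ exactly when $C$ passes through $v_i$.

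For the forward direction, given a dfvs $S \subseteq V(G)$ with $|S| \le k$, set $F = \{(v_i^-, v_i^+) : v_i \in S\}$; then $|F| = |S| \le k$, and by the correspondence above every cycle of $G'$ has its underlying cycle in $G$ hit by $S$, hence passes through some arc of $F$, so $F$ is a dfas of $G'$. For the reverse direction, suppose $F$ is a dfas of $G'$ with $|F| \le k$. Here I would use a standard "pushing" argument: for any arc $e \in F$ that is not an internal arc, it lies inside some vertex-gadget region, i.e. it is an arc $(v_i^+, v_j^-)$ of $G'$ corresponding to an original arc $(v_i, v_j)$ of $G$; every cycle through $(v_i^+, v_j^-)$ also passes through both internal arcs $(v_i^-, v_i^+)$ and $(v_j^-, v_j^+)$, so replacing $e$ by either of these internal arcs keeps $F$ a dfas without increasing its size. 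Iterating, we may assume $F$ consists only of internal arcs, say $F = \{(v_i^-, v_i^+) : v_i \in S\}$ for some $S \subseteq V(G)$ with $|S| \le |F| \le k$; by the cycle correspondence $S$ is then a dfvs of $G$. This argument is constructive and runs in time linear in $|G'| = O(|G|)$, which gives the "furthermore" part: scan $F$, push each non-internal arc to an incident internal arc, read off $S$.

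The main obstacle — really the only place requiring care — is the reverse direction's pushing step, specifically verifying that replacing a non-internal arc by an incident internal one never destroys the dfas property. The point to nail down is that a non-internal arc $(v_i^+, v_j^-)$ can only be "used" by a cycle that has just traversed the gadget of $v_i$ and is about to traverse the gadget of $v_j$, so the internal arcs $(v_i^-, v_i^+)$ and $(v_j^-, v_j^+)$ dominate it in the sense that every cycle through $(v_i^+, v_j^-)$ passes through both; this uses crucially that $v_i^-$ has out-degree exactly one in $G'$. One should also confirm that the hint set $W'$ plays no role in the equivalence beyond bookkeeping — it is carried along only because the DFAS-V problem statement requires a dfvs hint, and $W' = \{v_i^+ : v_i \in W\}$ is indeed a dfvs of $G'$ since deleting the $v_i^+$'s deletes all out-arcs of every gadget of $W$ — but $|W'| = |W| = k+1$ need not even be invoked for the present lemma. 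Everything else is routine.
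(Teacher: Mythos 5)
Your proof is correct and matches the paper's approach: both directions rely on the same cycle correspondence, and your ``pushing'' argument (replace a non-internal arc $(v_i^+,v_j^-)$ by an incident internal arc) is exactly the paper's construction of $S''=\{(v^-,v^+): (v^-,v^+)\in S' \text{ or } (u^+,v^-)\in S'\}$, just phrased as an iterative replacement rather than a single set definition. You also correctly note the (routine but omitted-in-the-paper) check that $W'$ is a dfvs of $G'$, which is needed for $(G',W',k)$ to be a well-formed DFAS-V instance.
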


\begin{proof}
    Let $S$ be a solution to the DFVS Compression instance $(G,W,k)$.
    We construct $S'=\{(v^-,v^+) | v \in S\}$ and argue that $S'$ is a solution to the DFAS-V instance $(G',W',k)$.
    By Definition \ref{def-expand}, we know that arc $(v^-,v^+)$ is the only arc that ends at $v^+$ and the only arc that starts from $v^-$.
    Thus for every cycle $C'$ in $G'$, there is a cycle $C$ in $G$ that passes through the arc set $\{(u,v) | (u^+,v^-)\in E(C')\}$, and the vertices it contains is $\{v | (v^-,v^+) \in E(C') \}$.
    Moreover, there is a vertex $v \in C \cap S$ since $S$ is a dfvs, and we have $(v^-,v^+) \in S' \cap E(C')$ by the definition of $S'$.
    Since $|S| \leq k$, we have $|S'| \leq k$ thus $S'$ is a solution of $(G',W',k)$.
    
    Let $S'$ be a solution to the DFAS-V instance $(G',W',k)$.
    Let $S'' = \{(v^-,v^+) | (v^-,v^+) \in S' \vee (u^+,v^-) \in S'\}$.
    We state that $S''$ is still a dfas in $G'$ because every cycle that passes through $(u^+,v^-)$ also passes through $(v^-,v^+)$, the only outgoing arc of $v^-$.
    Next, we construct $S = \{v | (v^-,v^+) \in S''\}$ and argue that $S$ is a solution to the DFVS Compression instance $(G,W,k)$.
    For every cycle $C$ in $G$, there is a cycle $C'$ in $G'$ that $E(C')=\{(u^+, v^-) | (u,v) \in E(C)\} \cup \{(v^-,v^+) | v \in C\}$.
    Moreover, there is an arc $(v^-,v^+) \in S'' \cap C'$ since $S''$ is a dfas, and we have $v \in S \cap C$ by the definition of $S$.
\end{proof}

Based on Lemma~\ref{sec-f}, we only need to solve DFAS-V.

\section{Solving DFAS-V}
In this section, we give our algorithm for DFAS-V.
Let $(G, W, k)$ be an instance of DFAS-V.
Our algorithm is based on the following two simple observations. 
\begin{observation}\label{onev}
    Let $G$ be a directed graph, $W$ be a dfvs of $G$, and $F$ be a dfas of $G$.
    There exists a vertex $w \in W$ such that there is no $(\{w\},W\setminus w)$-path in $G - F$.
\end{observation}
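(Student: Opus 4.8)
The plan is to exploit that $G - F$ is acyclic, which holds because $F$ is a dfas of $G$. First I would fix a topological ordering $\sigma$ of the DAG $G - F$, that is, a linear order on $V(G)$ such that every arc of $G - F$ points from an earlier vertex to a later one. Then I would take $w$ to be the vertex of $W$ that occurs latest in $\sigma$; this is well defined since $|W| = k+1 \geq 1$, and if $|W| = 1$ the claim is immediate because then $W \setminus w = \emptyset$.

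The key observation is that in a DAG, traversing any arc moves one strictly forward in $\sigma$, so every vertex reachable from $w$ in $G - F$ other than $w$ itself occurs strictly later than $w$ in $\sigma$. By the choice of $w$, no vertex of $W \setminus w$ occurs later than $w$ in $\sigma$, hence no vertex of $W \setminus w$ is reachable from $w$ in $G - F$; equivalently, there is no $(\{w\}, W \setminus w)$-path in $G - F$, which is exactly the assertion.

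I do not expect any genuine obstacle. The only points needing a little care are the degenerate case of a singleton (or, vacuously, empty) $W$, and the fact that the paper defines a \emph{path} as a walk rather than a simple path — but in an acyclic digraph a walk cannot revisit a vertex and is automatically monotone along $\sigma$, so the argument is unaffected. Incidentally, the hypothesis that $W$ is a dfvs is not used in this observation and is carried along only for the surrounding discussion.
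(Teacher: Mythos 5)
Your proof is correct and takes essentially the same approach as the paper: fix a topological ordering of the acyclic graph $G-F$ and take $w$ to be the vertex of $W$ occurring last in that order, so no arc of $G-F$ can lead from $w$ to any other vertex of $W$.
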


\begin{proof}
    Let $\pi$ be an arbitrary topological vertex order of $G- F$ and $w\in W$ be the vertex with the largest rank index under $\pi$.
    There is no $(\{w\},W\setminus w)$-path in $G- F$ by the definition of topological order.
\end{proof}

\begin{lemma}\label{alg_crucial}
    Let $G$ be a directed graph and $W$ be a dfvs of $G$.
    There exists a vertex $w \in W$ such that in $G'=\expand(G,w,(w^-,w^+))$ there is a minimum dfas containing an important $(\{w^+\}, (W\setminus w)\cup\{w^-\})$-cut.
\end{lemma}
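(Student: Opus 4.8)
The plan is to combine Observation~\ref{onev} with the important-cut machinery (Proposition~\ref{impcut-exist}). The key idea is that the vertex $w$ whose existence we want is exactly the one guaranteed by Observation~\ref{onev}, applied to a \emph{minimum} dfas $F$ of $G$.

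First I would fix a minimum dfas $F$ of $G$. Since $W$ is a dfvs and $F$ is a dfas, Observation~\ref{onev} gives a vertex $w \in W$ with no $(\{w\}, W\setminus w)$-path in $G - F$. I would then move to $G' = \expand(G, w, (w^-, w^+))$ and transport $F$ to a dfas $F'$ of $G'$: an arc $(u,w)$ of $G$ becomes $(u, w^-)$, an arc $(w,v)$ becomes $(w^+, v)$, and all other arcs are unchanged; the new arc $(w^-, w^+)$ is never placed in $F'$. One checks routinely (exactly as in the proof of Lemma~\ref{sec-f}) that $F'$ is a dfas of $G'$ with $|F'| = |F|$, and conversely that any dfas of $G'$ of size $|F|$ not using $(w^-,w^+)$ pulls back to a dfas of $G$ of the same size; hence $F'$ is a minimum dfas of $G'$ among those avoiding $(w^-,w^+)$, and since $(w^-,w^+)$ lies on no cycle through nothing shorter, $F'$ is in fact a global minimum dfas of $G'$ of the same size as $F$.

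Next I would argue that $F'$ contains a $(\{w^+\}, (W\setminus w)\cup\{w^-\})$-cut. In $G' - F'$ there is no path from $w^+$ to any vertex of $W\setminus w$ (this is the image of ``no $(\{w\}, W\setminus w)$-path in $G-F$''), and also no path from $w^+$ to $w^-$ (such a path would, together with the arc $(w^-,w^+)$, form a cycle in $G' - F'$, contradicting that $F'$ is a dfas). Hence the subset $S := F' \cap \{\text{arcs on some } (\{w^+\}, (W\setminus w)\cup\{w^-\})\text{-path}\}$ — more cleanly, the set of arcs of $F'$ whose removal is needed to disconnect $w^+$ from $(W\setminus w)\cup\{w^-\}$ — is a $(\{w^+\}, (W\setminus w)\cup\{w^-\})$-cut contained in $F'$. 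Now apply Proposition~\ref{impcut-exist} with $X = \{w^+\}$, $Y = (W\setminus w)\cup\{w^-\}$ to this cut $S$: there is an important $(X,Y)$-cut $S^\star$ with $|S^\star| \le |S|$ and such that every vertex reachable from $w^+$ in $G' - S$ is reachable from $w^+$ in $G' - S^\star$.

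The main obstacle — and the real content — is the final swap: I must show $(F' \setminus S) \cup S^\star$ is still a dfas of $G'$, so that replacing $S$ by $S^\star$ inside the minimum dfas $F'$ yields a minimum dfas containing the important cut $S^\star$ (minimality forces $|S^\star| = |S|$). Suppose not: then $G' - ((F'\setminus S)\cup S^\star)$ has a cycle $C$. Since $G' - F'$ is acyclic, $C$ must use an arc of $S$. I would trace $C$: it reaches such an arc, meaning it reaches a vertex that, in $G' - S$, was reachable from $w^+$ along the portion of $C$ before that arc — here I need to be careful to use that all arcs of $S$ lie "beyond" $w^+$ on $(X,Y)$-paths, so a cycle through an arc of $S$ must pass through $w^+$. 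Then the reachability-monotonicity clause of Proposition~\ref{impcut-exist} lets me conclude that the same vertices are reachable from $w^+$ in $G' - S^\star$, so the offending arc of $S$ is not actually reachable from $w^+$ after removing $S^\star$ — contradicting that $C$ is a cycle through it. Assembling these pieces: $(F'\setminus S)\cup S^\star$ is a dfas of size $\le |F'|$ in $G'$, hence a minimum dfas, and it contains the important $(\{w^+\}, (W\setminus w)\cup\{w^-\})$-cut $S^\star$, which is what the lemma claims. I expect the bookkeeping about which cycles must pass through $w^+$ to be the delicate step; everything else is a direct transcription of the gadget argument in Lemma~\ref{sec-f} and of standard important-cut replacement.
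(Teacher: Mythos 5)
Your overall strategy matches the paper's: fix a minimum dfas $F$, use Observation~\ref{onev} to pick $w$, carve out a $(\{w^+\}, Z)$-cut $S\subseteq F'$ inside the dfas (here $Z=(W\setminus w)\cup\{w^-\}$), replace it by an important cut $S^\star$ via Proposition~\ref{impcut-exist}, and check that $(F'\setminus S)\cup S^\star$ is still a dfas. That last step is exactly where your argument has a genuine gap.

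You write ``all arcs of $S$ lie `beyond' $w^+$ on $(X,Y)$-paths, so a cycle through an arc of $S$ must pass through $w^+$.'' That implication does not hold: an arc $(u,v)\in S$ may have its tail $u$ reachable from $w^+$, and yet there can be a cycle $u\to v\to\cdots\to u$ that never visits $w^+$ at all. The rest of your trace-the-cycle argument hinges on that false claim and therefore does not go through; the appeal to the reachability-monotonicity clause in the form ``the offending arc of $S$ is not actually reachable from $w^+$'' also misfires, since what the clause gives you is containment of reachability regions, not non-reachability of an arc.

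The missing ingredient is that $W$ is a dfvs, which is what forces every cycle to cross $S^\star$. Concretely, the paper takes $S$ to be the set of arcs leaving $R$, where $R$ is the set of vertices reachable from $w^+$ in $G'-F$ (this is cleaner than your ``arcs of $F'$ on some $(X,Y)$-path,'' and it guarantees $S\subseteq F$ with every arc of $S$ having its tail in $R$). Let $R'\supseteq R$ be the reachability set for the important cut $S^\star$; since $S^\star$ is a minimal cut, $S^\star$ is exactly the set of arcs leaving $R'$. Now take any cycle $C$. If $C$ avoids $S$, it meets $F\setminus S$ and we are done. If $C$ meets $S$, it visits a vertex of $R\subseteq R'$. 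Because $W$ is a dfvs, $C$ also visits a vertex of $W$: either it passes through $w$, in which case its image in $G'$ passes through $w^-\in Z$, or it avoids $w$ and passes through some vertex of $W\setminus w\subseteq Z$. Either way $C$ visits $Z\subseteq V(G')\setminus R'$, so $C$ goes from inside $R'$ to outside $R'$ and must use an arc leaving $R'$, i.e.\ an arc of $S^\star$. Without invoking that $W$ is a dfvs you cannot conclude the cycle ever leaves $R'$, and that is precisely what your proposal fails to supply.
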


\begin{proof}
    Let $F$ be a minimum dfas of $G$.
    By Lemma~\ref{onev}, we know that there is a vertex $w \in W$ such that there is no $(\{w\},W\setminus w)$-path in $G - F$.
        We note the following properties:
    \begin{enumerate}
        \item For any $(\{w\}, W \setminus w)$-path in $G$, there is a $(\{w^+\}, W \setminus w)$-path of the same arc set in $G'$, and vice versa.
        \item For any cycle that passes through $w$ in $G$, there is a $(\{w^+\}, \{w^-\})$-path of the same arc set in $G'$, and vice versa.
    \end{enumerate}
    Let $Z=(W\setminus w)\cup\{w^-\}$, $R$ be the set of vertices reachable from $w^+$ in $G' - F$, and $S$ be the set of arcs pointing out from $R$.
    If $(W \setminus w) \cap R \neq \emptyset$, by the above properties, there is a $(\{w\}, W \setminus w)$-path in $G-F$.
    If $w^- \in R$, then $F$ is not a dfas since $w^-$ is reachable from $w^+$ in $G'-F$, which implies a cycle that passes through $w$ in $G-F$.
    Therefore, $S$ intersects with every $(\{w^+\}, Z)$-path in $G'$.
    
    By Proposition \ref{impcut-exist}, there is an important $(\{w^+\}, Z)$-cut $S'$ in $G'$ such that $|S'|\leq |S|$ and $R \subseteq R'$, where $R'$ is the set of vertices reachable from $w^+$ in $G'- S'$.
    Let $F'=(F\setminus S) \cup S'$.
    Obviously, $F'$ is no larger than $F$ and $F'$ includes an important $(\{w^+\}, Z)$-cut $S'$.
    Now we argue that $F'$ is a dfas.
    Let $C$ be a cycle of $G$.
    If $C$ intersects with $S$, then $C$ also intersects with $S'$ since $C$ includes at least one vertex in $(R\setminus w) \cup \{w^+\}$.
    Otherwise, $C$ must intersect with $F\setminus S$ since $F$ is a dfas.
    Therefore $F'$ is a dfas of $G$ no larger than $F$, and $F'$ contains an important $(\{w^+\}, Z)$-cut.
\end{proof}

Based on Lemma~\ref{alg_crucial}, we can design algorithm by iteratively adding important $(\{w^+\}, Z)$-cuts to the solution set.
However, we do not know which vertex in $W$ is the vertex $w$ and do not know which important cut is the correct one.
We just enumerate all possible cases.
The whole algorithm is presented in Algorithm~\ref{alg:dfasv}.
The first seven steps handles trivial cases.
Step 8 is to enumerate all vertices in $W$ by taking each of them as the vertex $w$.
Step 10 is to consider each important cut of size at most $k$.
The algorithm is quite simple.
However, we analyze it in a neat way to get a better running time bound. 


\begin{algorithm}[t]
    \caption{DFAS-V$(G,W,k)$}
    \label{alg:dfasv}
    \KwIn{A directed graph $G$, a dfvs $W$ of $G$, and parameter $k$.}
    \KwOut{A dfas $F$ of $G$ whose size is at most $k$, or $nil$ if such set does not exist.}
    \uIf{$G$ is acyclic} {
        \Return{$\emptyset$}
    }
    \uElseIf {$k=0$} {
        \Return{nil}
    }
    \Else {
        \While{$\exists w \in W$ such that no cycle in $G$ contains $w$} {
            $W := W\setminus w$\\
        }
        \ForEach{$w \in W$}{
            $G' := \expand(G,w,(w^-,w^+))$\\
            \ForEach{important $(\{w^+\},(W\setminus w)\cup \{w^-\})$-cut $S$ in $G'$ with $|S| \leq k$}{
                $F := \text{DFAS-V}(G - S,W\setminus w,k-|S|)$\\
                \If {$F \neq nil$ } {
                    \Return {$F \cup S$}
                }
            }
        }
        \Return{nil}
    }
\end{algorithm}


\subsection{The Analysis}
Next, we analyze the running time bound of the algorithm in Algorithm~\ref{alg:dfasv} with the constraint that $|W|\leq k+1$ in the input. We derive the bound by analyzing the search tree generated by Algorithm~\ref{alg:dfasv}.
There are two branching operations.
The first is to consider each vertex in $W$ in Step 8 and the second is to consider each important cut with size at most $k$ in Step 10.
We call an important cut \emph{small} if it contains only one arc, we call it \emph{large} otherwise.
In fact, in Step 10, when we include a large important cut to the solution, one part of the running-time bound will decrease; when we include a small important cut to the solution, another part of the running-time bound will decrease.

We consider the trade-off between these two cases and get the following lemma to prove the time complexity of our algorithm.

\begin{lemma}
    \label{thm:num_leaves}
    Assume that $|W|\leq k+1$. For any constant $\ve \in (0,\frac 12)$, the number of leaves in the search tree generated by DFAS-V$(G,W,k)$ is bounded by $\mathcal O(k! k^2 \gamma_\ve^k)$, where $\gamma_\ve=\ve^{-\ve} (1-\ve)^{-(1-\ve)}$.
\end{lemma}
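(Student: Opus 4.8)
The plan is to set up a recurrence for the number $L(k, w_0)$ of leaves of the search tree, where $w_0 = |W|$ is the current size of $W$; the extra parameter is needed because the reduction in Step~6--7 can shrink $W$ without shrinking $k$. After the while-loop, every vertex of $W$ lies on a cycle; write $w_1 \le w_0$ for the resulting size. Step~8 creates a branch for each of the $w_1$ vertices, and inside each branch Step~10 iterates over important cuts $S$ of size $s = |S| \ge 1$ in the expanded graph, recursing on $(G-S, W\setminus w, k-s)$. Using Lemma~\ref{size1} to control the branching factor of Step~10, I would first write the crude recurrence
\begin{equation}
    L(k, w_0) \;\le\; w_1 \sum_{S} L\bigl(k - |S|,\, w_1 - 1\bigr),
\end{equation}
where the inner sum is over all important $(\{w^+\}, (W\setminus w)\cup\{w^-\})$-cuts $S$ with $|S|\le k$; since $\sum_S 4^{-|S|}\le 1$, the number of such cuts of any fixed size $s$ is at most $4^s$.

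The key quantitative idea, exactly as sketched in the introduction, is to split according to whether $S$ is small ($|S|=1$) or large ($|S|\ge 2$). When $|S|=1$ the factor $w_1$ from Step~8 is ``paid for'' because $w_1 - 1 < w_1 \le w_0$, and iterating this alone gives the bare $k!$ (really $w_0!$-type) factor. When $|S| = s \ge 2$ we incur a branching factor of roughly $w_1 \cdot 4^s$, but $k$ drops by $s$, \emph{and} --- this is the crucial structural point --- after deleting $S$ and recursing, the reduction rule will fire at least $s-1$ times before the next genuine Step~8 branch, because each arc of $S$ that is not the split-arc of $w$ corresponds to breaking a path/cycle through some $w'\in W$, eventually leaving $w'$ on no cycle. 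Concretely, I would argue that along any root-to-leaf path the total ``$k$-budget spent on large cuts'' is at most $k$, hence the number of times the reduction rule applies is at least (number of large-cut branchings), so that the product of the Step~8 factors $w_1, w_1', w_1'', \dots$ telescopes against a falling factorial. I would make this precise by a potential-function / amortization argument: assign to each node the quantity $\log(w!) + (\text{const})\cdot(\text{budget left for large cuts})$ and show it is non-increasing in a way that yields $L(k, k+1) = \mathcal{O}(k!\, k^2\, \gamma_\ve^k)$ after optimizing the split point.

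The optimization that produces $\gamma_\ve = \ve^{-\ve}(1-\ve)^{-(1-\ve)}$ comes from the following balance: suppose the reduction rule is invoked $z$ times in total along a root-to-leaf path. On one hand the large cuts then contribute a factor at most $4^{(\text{their total size})}$ where the total size is $\le z+ (\text{something})$; on the other hand the product of Step~8 factors is a sub-factorial that is \emph{smaller} than $k!$ by a factor of roughly $z!$ (we ``skip'' the large-cut levels in the factorial count because on those levels $w$ drops by more than $1$). Choosing the threshold $z = \ve k$: if $z \le \ve k$ the total large-cut size is $\le \ve k$ so all the $4^{(\cdot)}$ factors multiply to at most $4^{\ve k} = 2^{o(k)}$-style term absorbed into $\gamma_\ve^k$; if $z > \ve k$ then the saved $z! \ge (\ve k)!$ beats the $4^k$ worst case by Stirling, via $\binom{k}{\ve k} \le \gamma_\ve^k$, so this branch is dominated. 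The $k^2$ factor is slack absorbed from the ``$+1$'' in $|W| \le k+1$ and from the at-most-$k$ distinct cut sizes. I would phrase the whole thing as an induction on $k + w_0$ with hypothesis $L(k, w_0) \le C\, w_0!\, w_0\, \gamma_\ve^{k}$ for a suitable constant $C$, checking the base cases (Steps 1--4) and the inductive step by plugging the recurrence and bounding $\sum_s 4^s \cdot (w_1-1)! (w_1-1) \gamma_\ve^{k-s}$ against $w_0!\, w_0\, \gamma_\ve^k$.

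The main obstacle I anticipate is making the amortization between the Step~8 factorial-saving and the large-cut $4^s$-cost fully rigorous rather than heuristic: in particular, justifying that the reduction rule really does fire $\ge s-1$ times ``because'' a large cut of size $s$ was added $t$ levels earlier requires tracking \emph{which} vertices of $W$ the cut $S$ separates from $w^+$, and confirming that these vertices genuinely become cycle-free in the recursive instance $G-S$ (not merely in $G'-S$ before re-identifying $w^-, w^+$). This is where I would spend most of the effort: proving a lemma of the form ``if $S$ is an important $(\{w^+\}, (W\setminus w)\cup\{w^-\})$-cut of size $s$ in $G' = \expand(G, w, (w^-, w^+))$, then in $G - S$ at least $s - 1$ vertices of $W \setminus w$ lie on no directed cycle,'' using minimality of the important cut together with the reachability property in the definition of important cut. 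Once that structural lemma is in hand, the arithmetic optimization yielding $\gamma_\ve$ is a routine (if fiddly) application of Stirling's formula and the binomial bound $\binom{k}{\ve k} \le \gamma_\ve^{\,k}$.
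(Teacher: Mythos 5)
Your proposal correctly identifies the tension (large cuts cost branching factor but should somehow shorten the path), but the structural lemma you propose to drive the amortization is false. You claim that if $S$ is an important $(\{w^+\},(W\setminus w)\cup\{w^-\})$-cut of size $s$ in $G'=\expand(G,w,(w^-,w^+))$, then at least $s-1$ vertices of $W\setminus w$ lie on no cycle of $G-S$. Counterexample: take $W=\{w,w_1,w_2\}$ with arcs $(w,u_1),(w,u_2),(u_1,w_1),(u_2,w_1),(w_1,w),(w_1,w_2),(w_2,w_1)$. The minimum and unique important $(\{w^+\},\{w_1,w_2,w^-\})$-cut is $S=\{(u_1,w_1),(u_2,w_1)\}$ of size $2$, yet in $G-S$ the cycle $w_1\to w_2\to w_1$ survives intact, so no vertex of $W\setminus w$ becomes cycle-free. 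The intuition ``each arc of a large cut kills a cycle through some $w'\in W$'' fails because the cut only destroys paths leaving $w^+$; cycles through $W\setminus w$ that avoid $w$ are untouched. Since your telescoping/potential argument rests on the reduction rule reliably firing after each large cut, the induction does not close.

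The paper's actual proof avoids any such structural claim by arguing globally rather than locally. It uses only that each cut size satisfies $x_i\ge 1$ (this is all Steps 6--7 of the algorithm guarantee) and that $\sum_i x_i\le k$, so a root-to-leaf path of length $a$ with $b$ small-cut positions satisfies $b+2(a-b)\le k$, i.e.\ $b\ge 2a-k$. It then categorizes leaves by the vertex sequence $\pi$ together with a set $B$ of $\max(0,2a-k)$ forced-small positions, proves by induction on $k$ via Lemma~\ref{size1} that the number of matching paths is at most $4^{k-|B|}$, and bounds the total by $\sum_a\frac{(k+1)!}{(k+1-a)!}\binom{a}{\max(0,2a-k)}4^{2(k-a)}$, splitting into three regimes relative to $\ve$ to extract $\gamma_\ve^k$. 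The permutation-versus-cut trade-off you are after is realized purely through the budget constraint $\sum x_i\le k$, with no causal link from large cuts to later reduction-rule firings. Your Stirling/binomial endgame would survive this reorganization, but the amortization step as you describe it cannot; you would need to replace the false structural lemma with the budget-counting decomposition indexed by $(\pi,B)$.
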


\begin{proof}
Let $T$ be the search tree generated by the algorithm. 
We consider the path from the root of $T$ to each leaf of $T$.
We categorize the path by the \textit{vertex sequence} $\pi = \langle w_1, w_2, \ldots, w_l \rangle$ and the \textit{cut-size sequence} $\mathcal{X} = \langle x_1, x_2, \ldots, x_l \rangle$,
where $w_i\in W$ is the vertex selected in Step 8 in the $i$th loop of the algorithm and $x_i$ is the size of the corresponding important cut in Step 10 in the $i$th loop. Therefore, we have $x_i\geq 1$ for each $i$ (because we have executed Steps 6-7) and $\sum_{i=1}^l x_i\leq k$. Note that $l$ can be a number smaller than $|W|$.

For each pair of fixed $\pi$ and $\mathcal{X}$, there might be more than one corresponding path because there may be multiple different important cuts of the same size $x_i$.
Let $B \subseteq \pi$ be a subset of vertices in $\pi$.
We use $f(\pi,B,k)$ to denote the number of paths satisfying the following properties: 
\begin{enumerate}
    \item its vertex sequence is $\pi$.
    \item the sum of the corresponding cut-size sequence is $k$, i.e., $\sum_{i=1}^l x_i= k$.
        \item for each $w_i\in B$, the corresponding cut-size $x_i$ is 1.
\end{enumerate}

This means if $w_i\in B$, then the corresponding important cut in the $i$-th loop is a small cut with size 1.
Note that for $w_j\notin B$, there is no requirement on $x_j$ and the cut can be either large or small. 
First, we prove the following claim.

\begin{claim}\label{claim}
For every $\pi, B, k$ we have $f(\pi,B,k)\leq 4^{k-|B|}$.
\end{claim}

We prove it by induction on $k$. 
When $k=0$, we have $\pi=\langle \rangle$ and $B=\emptyset$.
Therefore $f(\pi,B,0)=1$.
Assume that Claim \ref{claim} holds for the parameter being at most $k-1$. We can see that it also holds for the parameter being $k$. 

First, we consider the case that $w_1 \notin B$.
Let $\mathcal S$ be the set of important cuts with size $\geq 2$ corresponding to $w_1$. By Lemma~\ref{size1}, we have $\sum_{S \in \mathcal S}4^{-|S|}\leq 1$.
Therefore,

    $$
    f(\pi = \langle w_1, w_2, \ldots, w_l \rangle,B,k)
    =\sum_{S \in \mathcal S}f(\langle w_2, \ldots, w_l\rangle, B, k-|S|)$$
    $$
    =\sum_{S \in \mathcal S}4^{k-|S|-|B|}
    =4^{k-|B|}\sum_{S \in \mathcal S}4^{-|S|} \leq 4^{k-|B|}.
    $$
    
Second, we assume that $w_1 \in B$.
For this case, we have 
    
    $$
    f(\pi, B, k)
    =f(\langle w_2, \ldots, w_l \rangle, B-w_1, k-1)
    =4^{k-1-(|B|-1)} = 4^{k-|B|}.
    $$

Note that as the length of $\pi$ increases, so too should the size of $B$.
   The following observation shows some lower bound on the number of small cuts for each $\pi$, which is also a lower bound for $|B|$.
    Let $|\pi|=a$ and $|B|=b$,
    Since every \emph{large} important cut contains at least $2$ arcs, we have $k-b \geq 2(a-b)$.
    Therefore $|B|=b \geq 2a-k$.

    Let $g(k)$ be the number of leaves in the search tree generated by our algorithm.
    We are ready to analyze the value of $g(k)$ now.
    Let $\mathrm{perm}(A)$ be the set of all permutations of the set $A$.
    According to the definition of $f(\pi, B, k)$, we have that
    
    $$
    g(k) \leq \sum_{A \subseteq W} \sum_{\pi \in \mathrm{perm}(A)} \sum_{B \subseteq A} f(\pi,B,k).
    $$
    
    We use ``$\leq$'' instead of ``$=$'' in the above relation because we allow the important cuts corresponding to vertices in $A \setminus B$ also to be small.
    Let $|A|=a$ and $|B|=b$.
    Since $|W|\leq k+1$, the number of choices of $\pi$ with length $a$ is at most $(k+1)!/(k+1-a)!$. We have showed that the size of $B$
    can be $\max(2a-k, 0)$. The number of subsets $B$ of $A$ with size $\max(2a-k, 0)$ is $\binom{a}{\max(0,2a-k)}$. By Claim~\ref{claim}, we have that 
    $$
    g(k)\leq \sum_{a=1}^{k}\frac{(k+1)!}{(k+1-a)!} \binom{a}{\max(0,2a-k)} 4^{k-(2a-k)}.
    $$
    
    Let 
    $$
    h(k,a)=\frac{4^{2(k-a)}}{(k+1-a)!} \binom{a}{\max(0,2a-k)}. 
    $$
    
    Next we analyze $h(k,a)$, by distinguish several different relations between $a$ and $k$.
    Let $\ve$ be an arbitrary constant in $(0, \frac 12)$.
    Observe that $c^n/n! \leq c^c$ holds for arbitrary $c,n \geq 1$, we can use the denominator $(k-a+1)!$ to consume every factor in the form of $c^{k-a}$ in the nominator.
    
    When $a \leq k-a$, we have $2a-k \leq 0$. Therefore,
    
    $$
    h(k,a)=\frac{4^{2(k-a)}}{(k-a+1)!} \binom a0=\mathcal O(1).
    $$
    
    When $\ve a \leq k-a < a$, we have $a \leq (k-a)/\ve$. Therefore,
    
    $$
    h(k,a)\leq \frac{4^{2(k-a)}}{(k-a+1)!} 2^{(k-a)/\ve}  =\mathcal O(1).
    $$
    
    When $0 \leq k-a < \ve a$, we have
    
    $$
    h(k,a) \leq \frac{4^{2(k-a)}}{(k-a+1)!} \binom{a}{k-a} \leq O\left(\binom{a}{\ve a}\right) = \mathcal O(\gamma_\ve^a),
    $$
    where $\gamma_\ve=\ve^{-\ve} (1-\ve)^{-(1-\ve)}$.
    
    For each $a$, we have that $h(k,a)= \mathcal O(\gamma_\ve^a)$.
    Since $a \leq k$, we get 
    
    $$
    g(k) \leq (k+1)! \sum_{a=1}^{k}h(k,a)=\mathcal O \left((k+1)!\sum_{a=1}^{k}\gamma_\ve^a \right)
    =\mathcal O((k+1)!k \gamma_\ve^k)=\mathcal O(k!k^2\gamma_\ve^k).
    $$
\end{proof}

\begin{theorem}
DFAS-V$(G,W,k)$ with $|W|\leq k+1$ runs in $\mathcal O(k!2^{o(k)}(n + m))$ time.
\end{theorem}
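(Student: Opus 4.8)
The plan is to combine Lemma~\ref{thm:num_leaves}, which already bounds the number of leaves of the search tree $T$ produced by DFAS-V$(G,W,k)$, with (i) a bound on the total number of nodes of $T$ and (ii) a bound on the work performed at a single node.

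First I would bound $|V(T)|$. Along every root--leaf path the parameter strictly decreases: Steps 6--7 ensure that every $w\in W$ still lies on a cycle of $G$, so after $\expand(G,w,(w^-,w^+))$ there is always a $(\{w^+\},(W\setminus w)\cup\{w^-\})$-path (each cycle through $w$ becomes such a path), whence every important cut $S$ examined in Step~10 satisfies $|S|\ge 1$ and the recursive call is on $k-|S|<k$. Hence $T$ has depth at most $k$. A rooted tree of depth at most $k$ with $L$ leaves has only $\mathcal O(Lk)$ nodes in total, because it has at most $2L-1$ nodes that are leaves or have two or more children, and each such node is topped by a chain of degree-one nodes of length at most $k$. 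With Lemma~\ref{thm:num_leaves} this gives, for every constant $\ve\in(0,\frac12)$, $|V(T)|=\mathcal O(k\cdot k!\,k^2\gamma_\ve^k)=\mathcal O(k!\,k^3\gamma_\ve^k)$.

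Second I would bound the work at one node. Every graph occurring in $T$ has $\mathcal O(n+m)$ vertices and arcs (at the root $k<|W|\le n$, and one level of recursion adds at most one vertex and one arc), so Steps 1--2 (acyclicity), Steps 6--7 (exhaustively deleting from $W$ the vertices lying on no cycle, which costs a single strongly-connected-components computation since it does not modify $G$), and forming $\expand(G,w,(w^-,w^+))$ each take $\mathcal O(n+m)$ time. The only delicate point is Step~10: for a fixed $w$, Theorem~\ref{impcut-enum} lists all important $(\{w^+\},(W\setminus w)\cup\{w^-\})$-cuts of size at most $k$ in time $\mathcal O((1+N_w)k^2(n+m))$, where $N_w$ is their number, and each of these cuts is exactly one child of the current node. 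Summing over the at most $k+1$ choices of $w$, a node with $c$ children costs $\mathcal O((k+c)k^2(n+m))$, and since the sum over all nodes of their numbers of children equals $|V(T)|-1$, the total running time is $\mathcal O(k^3(n+m)\,|V(T)|)=\mathcal O(k!\,k^6\gamma_\ve^k(n+m))$ for every constant $\ve\in(0,\frac12)$.

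Finally I would remove the $\gamma_\ve$ factor. Here $\log_2\gamma_\ve=-\ve\log_2\ve-(1-\ve)\log_2(1-\ve)$ is the binary entropy of $\ve$ and tends to $0$ as $\ve\to 0^+$, so for every $\delta>0$ there is a constant $\ve\in(0,\frac12)$ with $\gamma_\ve^k<2^{\delta k}$; together with $k^6=2^{o(k)}$ this yields running time $\mathcal O_\delta(k!\,2^{\delta k}(n+m))$ for every $\delta>0$, and letting $\delta=\delta(k)\to0$ slowly enough to absorb the $\delta$-dependent constant gives $\mathcal O(k!\,2^{o(k)}(n+m))$. I expect the per-node analysis to be the only real obstacle: one must ensure that enumerating important cuts does \emph{not} cost a flat $4^k$ per node, which is why its running time is charged as $\mathcal O(N_w k^2(n+m))$ -- proportional to the number of cuts actually produced, equivalently to the number of edges of $T$, which Lemma~\ref{thm:num_leaves} already controls -- rather than to the worst-case bound $4^k$ of Theorem~\ref{impcut-enum}.
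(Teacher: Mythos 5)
Your proof is correct and follows essentially the same approach as the paper's: bound the number of nodes by leaves times depth using Lemma~\ref{thm:num_leaves}, charge the time for enumerating important cuts (via Theorem~\ref{impcut-enum}) to the edges of the search tree, and then let $\ve\to 0$ to turn $\gamma_\ve^k$ into $2^{o(k)}$. Your per-node accounting is in fact slightly more careful than the paper's (you correctly pay for the up to $k+1$ choices of $w$ at each node and use the $\mathcal O((1+N_w)\cdot)$ form of the enumeration bound, yielding $k^6$ rather than the paper's $k^5$), though this distinction is absorbed by $2^{o(k)}$ either way; the only small misstatement is that the recursion does not add a vertex or arc (the call is on $G-S$, a subgraph of $G$), but the $\mathcal O(n+m)$ size bound you need still holds.
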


\begin{proof}
    By Lemma \ref{thm:num_leaves}, we know that the number of nodes in the search tree is $\mathcal O(k!k^3\gamma_\ve^k)$ since the length of any search path is at most $k$.
    By Theorem \ref{impcut-enum}, the set $\mathcal S$ of all important $(X,Y)$-cuts of size at most $k$ in a directed graph $G=(V,E)$ can be constructed in $\mathcal O(|\mathcal S|k^2(n+m))$.
    Since each important cut found by the algorithm corresponds to an edge in the search tree, the number of important cuts found in the algorithm is the number of nodes in the search tree minus 1 and the time used to find important cuts in Alg.\ref{alg:dfasv} is $\mathcal O(k!k^5\gamma_\ve^k(n+m))$, w.r.t. constant $\ve \in (0, \frac 12)$.
    This is also the time complexity of the whole algorithm since the running time of every other step is linear to the size of the input graph.
    Specifically, we can use Tarjan's algorithm \cite{DBLP:journals/siamcomp/Tarjan72} to implement steps 6-7.
    When $\ve \in (0,\frac 12)$, the value $\gamma_\ve$ monotonically increases w.r.t. $\ve$ since $\frac{\mathrm d}{\mathrm d \ve} \left( \frac1{\gamma_\ve}\right) = \ve^\ve (1-\ve)^{1-\ve} (\log \ve - \log(1-\ve)) < 0$.
    Moreover, we have $\lim_{\ve \to 0} \gamma_\ve = 1$.
    Thus, for any constant $c>1$, there exists $\ve \in (0, \frac 12)$ such that $1<\gamma_\ve < c$, and the time complexity of Alg.\ref{alg:dfasv} is $\mathcal O(k!k^5 2^{o(k)}(n+m)) = \mathcal O(k!2^{o(k)}(n+m))$.
\end{proof}

Finally, by Lemma \ref{sec-f}, there is an algorithm that solves DFVS Compression in $\mathcal O(k!2^{o(k)}(n + m))$ time, and by Lemma \ref{thm:linear}, there is an algorithm that solves DFVS in $\mathcal O(k!2^{o(k)}(n + m))$ time, which completes the proof of Theorem \ref{thm:time}.

\section{Conclusion}
In this paper, we give a simplified algorithm for the DFVS Compression problem and improve the running-time bound for DFVS. The dominating part of the running-time bound is still $k!$.
We cannot avoid this because we indeed still need to enumerate all permutations of $k$ elements in the worst case. However, we cleverly combine enumeration of permutations and important cuts to simplify previous algorithms and improve the running-time bound. Now DFVS can be solved in $\mathcal O(\left(\frac kc\right)^k (n+m))$ time with $c=e/(1+\ve)$  for any constant $\ve > 0$.
Whether DFVS allows a single-exponential parameterized algorithm is still a challenging open problem in parameterized complexity.

\section*{Acknowledgments}
We thank all anonymous reviewers for their insight comments.
The work is supported by the National Natural Science Foundation of China, under grant 62372095.

\bibliography{pdfvs}
\end{document}